\documentclass [12pt] {article}

\usepackage[utf8]{inputenc}

\usepackage{amsmath}  
\usepackage{amsfonts}
\usepackage{amssymb}
\usepackage{amsthm}
\usepackage{cite}
\usepackage{fullpage}
\usepackage[]{nicefrac}
\usepackage{bbm}
\usepackage{qcircuit}
\usepackage{authblk}
\usepackage{color}
\usepackage{float}
\usepackage[normalem]{ulem}
\usepackage{cancel}
\usepackage{graphicx}
\usepackage{varwidth}
\usepackage{tikz,tkz-graph}
\usepackage{enumitem}

\newcommand{\ignore}[1]{}
\newcommand{\ket}[1]{\left|#1\right\rangle}
\newcommand{\bra}[1]{\left\langle#1\right|}
\newcommand{\braket}[2]{\left\langle #1| #2 \right\rangle}
\newcommand{\ketbra}[2]{ \left| #1 \right\rangle\left\langle #2 \right|}
\newcommand{\prnt}[1]{\left( #1 \right)}
\newcommand{\prntt}[1]{\left[ #1 \right]}\newcommand{\prnttt}[1]{\left\{ #1 \right\}}
\newcommand{\abs}[1]{\left| #1 \right|}

\newcommand{\yanote}[1]{}

\def\PSPACE{\ensuremath{\textsc{PSPACE}}}
\def\PP{\ensuremath{\textsc{PP}}}
\def\poly{\ensuremath{\textsc{poly}}}

\newtheorem{theorem}{Theorem}
\newtheorem{fact}{Fact}
\newtheorem{corollary}{Corollary}
\theoremstyle{definition}
\newtheorem{definition}{Definition}
\newtheorem{lemma}{Lemma}

\title{On the Hardness of Detecting Macroscopic Superpositions}
\author[1]{Scott Aaronson}
\author[1]{Yosi Atia}
\author[2,3]{Leonard Susskind}
\affil[1]{\small{Department of
Computer Science, University of Texas at Austin, Austin, TX, USA}}
\affil[2]{\small{SITP, Stanford University, Stanford, CA 94305, USA}}
\affil[3]{\small{Google, Mountain View, CA 94043, USA}}
\date{}

\begin{document}

\bibliographystyle{unsrt}
\maketitle
\abstract{
When is decoherence ``effectively irreversible''? \ Here we examine this
central question of quantum foundations using the tools of quantum
computational complexity. \ We prove that, if one had a quantum circuit
to determine if a system was in an equal superposition of two
orthogonal states (for example, the $\ket{\mathrm{Alive}}$ and $\ket{\mathrm{Dead}}$ states of
Schr\"{o}dinger's cat), then with only a slightly larger circuit, one
could also \textit{swap} the two states (e.g., bring a dead cat back
to life). \ In other words, observing interference between the $\ket{\mathrm{Alive}}$
and $\ket{\mathrm{Dead}}$ states is a ``necromancy-hard'' problem, technologically
infeasible in any world where death is permanent. \ As for the
converse statement (i.e., ability to swap implies ability to detect
interference), we show that it holds modulo a single exception,
involving unitaries that (for example) map $\ket{\mathrm{Alive}}$ to $\ket{\mathrm{Dead}}$ but
$\ket{\mathrm{Dead}}$ to $-\ket{\mathrm{Alive}}$. \ We also show that
these statements are robust---i.e., even a \textit{partial} ability to
observe interference implies partial swapping ability, and vice versa. \ Finally, without relying on any unproved
complexity conjectures, we show that all of these results are
quantitatively tight. \ Our results have possible implications for the
state dependence of observables in quantum gravity, the subject that
originally motivated this study.
}

\section{Introduction}
Schr\"odinger's cat famously raised the question: how large does a quantum state have to be, before we can take it to represent actual events rather than just potentialities? \ In practice, the larger a state, the harder it is to keep track of all of its degrees of freedom, and the harder it is to prevent it from interacting with its environment; both effects can quickly make it infeasible to observe quantum coherence between different branches of the state. \ But is there some principled criterion for saying when two branches have become ``macroscopically distinct,'' in the sense that observing interference between them is now so technologically intractable that one might as well speak in terms of a ``collapse'' having happened?\footnote{Similarly, one of the questions raised in the context of the AdS/CFT correspondence is the ``state dependence of observables'' \cite{PR14, Harlow14}. \ For example, in the discussion of black hole firewalls, which bulk operator an observer can apply depends on the spacetime background---e.g., is there a black hole or no black hole? \ This means that the set of measurements one can perform would depend on the quantum state of the spacetime background. \ While this seems to make little sense in the context of standard quantum mechanics, it is really about the dictionary between the standard quantum mechanics of the boundary holographic description and the incompletely understood bulk description of phenomena behind the horizon. \ Our results show that, if an observer cannot efficiently map one spacetime branch to the other, then she also cannot efficiently measure a superposition of the two branches, and thus effectively sees one or the other. }

Brown and Susskind \cite{BS18} conjectured that \emph{relative complexity}, as defined below, characterizes how hard it is to observe coherence between two orthogonal quantum states $\ket x$ and $\ket y$---in the sense of performing a measurement that accepts the superposition $\frac{\ket x + \ket y}{\sqrt{2}}$ and rejects the superposition $\frac{\ket x - \ket y}{\sqrt{2}}$ with high probability. \ (We note that, by a convexity argument, this is essentially equivalent to distinguishing either of those two superpositions from the \textit{classical mixture} $\frac{1}{2} (\ket x \bra x + \ket y \bra y )$.)
\begin{definition} \label{def:RelCmplx} [Relative complexity, adapted from  \cite{Susskind2018}] ~\\Let $\ket{x},\ket{y}$ be two $n$-qubit pure quantum states. \ Their relative complexity, $\mathcal C_\varepsilon (\ket x,\ket y)$, is the minimal number of gates in a circuit $C$ such that $$\big|\bra{y}\bra{0\dots0}C\ket{x}\ket{0\dots0}\big|^2\ge 1- \varepsilon.$$
The gates are chosen from an arbitrary fixed universal set of 1-qubit and 2-qubit gates;
ancilla qubits are allowed as long as they return the $\ket{0\dots0}$ state. \end{definition}

We will omit the dependence on $\varepsilon$ when it is not necessary. \ It is easy to see that $\mathcal C_\varepsilon$ is a metric: it is symmetric; it is zero iff $\ket{x}=\ket y$; and it satisfies the triangle inequality (with $\varepsilon$ increased):
$$
    \mathcal C(\ket x , \ket y) + \mathcal C(\ket y , \ket z) \ge \mathcal C(\ket x , \ket z) 
$$
By a counting argument, the relative complexity of almost any pair of  $n$-qubit states is $2^{\Omega(n)}$. \ Importantly, two states could be orthogonal, but still extremely close in relative complexity distance: for example, $\ket{0^n}$ and $\ket{0^{n-1}1}$. \ Conversely, two states could be close in $\ell_2$-distance but far in relative complexity: for example, consider the states $\ket{0^n}$ and $\ket{\phi}=\sqrt{1-\varepsilon}\ket{0^n}+\sqrt{\varepsilon}\ket{\psi}$, where $\ket\psi$ is Haar-random and $\varepsilon$ is small. The $\ell_2$-distance is between them is $O(\varepsilon)$, but for instance, $\mathcal C_{\varepsilon/4}(\ket \psi, \ket \phi)$ is exponential.

In this paper we also consider a slightly different notion, the \emph{swap complexity} of two states, which is at least as large as their relative complexity but could be larger.
\begin{definition} \label{def:swapability}
Let $\ket{x},\ket y$ be two quantum states. \ Their swap complexity, $\mathcal{S}_\varepsilon (\ket{x},\ket{y})$, is the minimal number of gates in a circuit $C$ such that
$$
    \frac{\abs{\bra x \bra{0\dots0} C \ket y \ket{0\dots0}+ \bra y \bra{0\dots0} C \ket x \ket{0\dots0}}}{2} \ge 1-\varepsilon
$$
Ancilla qubits are allowed as long as they return the $\ket{0\dots0}$ state.
\end{definition}

We show that the complexity of transforming $\ket x \leftrightarrow \ket y$, is (up to a constant factor) equal to the complexity of perfectly distinguishing between $\ket \psi=\frac{\ket x+\ket{y}}{2}$ and $\ket \phi= \frac{\ket x - \ket y}{2}$. \ The swap complexity is at least the relative complexity, and as we show later, there are cases where the swap complexity is exponential while the relative complexity is $O(1)$.

Going further, we show an equivalence even between \textit{approximate} versions of swapping and distinguishing. \ Without relying on unproved complexity conjectures, we also show that our approximate equivalence theorem is optimal, in the sense that a result with better error parameters would be false.

Of course, qualitatively similar observations had been made before, but as far as we know, never in the sharp form here, which seems to require the formal notion of quantum circuit complexity or something similar. \ As one example, Aharonov and Rohrlich \cite[Chapter 9]{AR08} pointed out that the ability to measure a cat in the $\{\ket{\mathrm{Alive}}\pm \ket{\mathrm{Dead}} \}$ basis would imply the ability to revive a dead cat with success probability $1/2$, a weaker statement than what we show here.

The equivalence between swap complexity and  observing coherence is interesting in the context of the foundations of quantum mechanics. \ For example, in the Schr\"odinger's cat experiment, having the technological ability to \emph{detect} that the cat was in superposition state at all, implies having the ability 
to perform a unitary that revives a dead cat, an ability that one could call ``quantum necromancy.''   \ In other words, our results show that, if reviving a dead cat is considered ``hard''---for essentially any reasonable definition of ``hard''---then distinguishing Schr\"{o}dinger's cat from a classical mixture is ``hard'' in that same sense, and the cat can be treated as effectively decohered. \ Similarly, in the Wigner's Friend thought experiment \cite{Wigner55,Bass71}, if Wigner can detect that his friend is in superposition then he can also swap his friend's mental states.

\section{Main Result}
\subsection{Perfect case}
We start by proving the equivalence (in circuit complexity) between a perfect swapper of two orthogonal states, and a perfect distinguisher for the corresponding conjugate states.
\begin{theorem} \label{thm:perfect}
Let $\ket x,\ket y$ be $n$-qubit orthogonal quantum states, and let $\ket{\psi}=\frac{\ket x + \ket y}{\sqrt 2}$ and $\ket{\phi}=\frac{\ket x - \ket y}{\sqrt 2}$. \ The following two statements are equivalent:
\begin{enumerate} [label=(\roman*)]
    \item There is a unitary $U$ such that $U\ket x = \ket y$ and $U\ket y = \ket x$ with circuit complexity $O(T(n))$.
    \item There is a unitary which perfectly distinguishes between $\ket{\psi}$ and $\ket \phi$ with circuit complexity $O(T(n))$. 
\end{enumerate}
Indeed, one can perfectly distinguish $\ket{\psi}$ from $\ket\phi$ given a single black-box access to a controlled swap of $\ket x$ and $\ket y$, and one can swap $\ket x$ and $\ket y$ given a single black-box access to a unitary $A$ that simulates a measurement perfectly distinguishing $\ket{\psi}$ from $\ket{\phi}$, as well as a single black-box access to $A^\dagger$.
\end{theorem}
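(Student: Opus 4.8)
The plan is to prove each implication with a single, standard quantum primitive: a phase–kickback (Hadamard) test for $(i)\Rightarrow(ii)$, and conjugation of a one–qubit phase flip by the distinguishing unitary for $(ii)\Rightarrow(i)$. The unifying observation is that, on the two–dimensional subspace $\mathrm{span}\{\ket x,\ket y\}$, a swapper $U$ has $\ket\psi$ as a $+1$ eigenvector and $\ket\phi$ as a $-1$ eigenvector (since $U\frac{\ket x+\ket y}{\sqrt2}=\frac{\ket y+\ket x}{\sqrt2}$ and $U\frac{\ket x-\ket y}{\sqrt2}=-\frac{\ket x-\ket y}{\sqrt2}$). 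Dually, because $\ket x=\frac{\ket\psi+\ket\phi}{\sqrt2}$ and $\ket y=\frac{\ket\psi-\ket\phi}{\sqrt2}$, flipping the relative phase of $\ket\psi$ and $\ket\phi$ \emph{is} the swap $\ket x\leftrightarrow\ket y$, and detecting that phase \emph{is} distinguishing $\ket\psi$ from $\ket\phi$. Orthogonality of $\ket x$ and $\ket y$ is used exactly to guarantee $\ket\psi\perp\ket\phi$, so that perfect distinguishing is possible at all.

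For $(i)\Rightarrow(ii)$: from a size-$O(T(n))$ circuit for $U$, build its controlled version $\Lambda(U)$ by controlling each gate, at a constant-factor cost. Prepare a control qubit in $\ket+$ and run $\Lambda(U)$ on the control together with the unknown $\ket\chi\in\{\ket\psi,\ket\phi\}$; by phase kickback the control becomes $\ket+$ if $\ket\chi=\ket\psi$ and $\ket-$ if $\ket\chi=\ket\phi$, with the $n$-qubit register untouched. A Hadamard on the control and a computational-basis measurement then identify $\ket\chi$ perfectly. Given only black-box access to the controlled swap $\Lambda(U)$ (which need only act as the identity on the control-$\ket 0$ branch and as the $\ket x\leftrightarrow\ket y$ swap on the control-$\ket 1$ branch), the same protocol uses exactly one query.

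For $(ii)\Rightarrow(i)$: a unitary $A$ on $n+a$ qubits that perfectly distinguishes the two states means that, after applying $A$ to $\ket\chi\ket{0^a}$, measuring a designated output qubit in the computational basis is deterministic and $\ket\chi$-dependent; hence (relabeling outcomes if needed) $A\ket\psi\ket{0^a}=\ket0\ket{\eta_0}$ and $A\ket\phi\ket{0^a}=\ket1\ket{\eta_1}$ for some states $\ket{\eta_0},\ket{\eta_1}$. Set $W:=A^\dagger Z_{\mathrm{out}}A$, where $Z_{\mathrm{out}}$ is a Pauli $Z$ on the output qubit. By linearity $A\ket x\ket{0^a}=\tfrac{1}{\sqrt2}\big(\ket0\ket{\eta_0}+\ket1\ket{\eta_1}\big)$ and $A\ket y\ket{0^a}=\tfrac{1}{\sqrt2}\big(\ket0\ket{\eta_0}-\ket1\ket{\eta_1}\big)$, and $Z_{\mathrm{out}}$ interchanges these two vectors; therefore $W$ maps $\ket x\ket{0^a}\mapsto\ket y\ket{0^a}$ and $\ket y\ket{0^a}\mapsto\ket x\ket{0^a}$, restoring the ancillas. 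So $W$ is a legitimate swapper built from one call to $A$, one to $A^\dagger$, and one extra gate, giving circuit complexity $O(T(n))$.

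The step I expect to need the most care is pinning down the canonical form of the distinguisher in $(ii)$: one must argue that any "unitary plus terminal measurement" that perfectly distinguishes $\ket\psi$ from $\ket\phi$ can be taken, with only constant overhead, to have the clean structure above (perfect distinguishing forces $A\ket\psi\ket{0^a}$ to lie entirely in the output-$\ket0$ subspace, and likewise for $\ket\phi$), and to track the ancilla register carefully so that $W$ genuinely returns $\ket{0^a}$ and hence qualifies as a swapper in the sense of statement $(i)$. Everything else reduces to bookkeeping with $2\times2$ unitaries on the span of $\ket x$ and $\ket y$.
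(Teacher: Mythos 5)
Your proof is correct and follows essentially the same approach as the paper's: a Hadamard/phase-kickback test with controlled-$U$ for $(i)\Rightarrow(ii)$, and the conjugation $A^\dagger Z_{\mathrm{out}} A$ for $(ii)\Rightarrow(i)$. Your framing of $\ket\psi,\ket\phi$ as the $\pm 1$ eigenvectors of the swapper is a clean way to see the phase kickback, and your care about the ancillas returning to $\ket{0^a}$ matches the paper's footnote on the same point.
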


\begin{proof} ~\\
$(i)\rightarrow (ii)$\\
For distinguishing between $\ket \psi$ and $\ket \phi$, apply $U$ to $\ket \psi$ or $\ket \phi$, conditioned on a $\ket +$ control qubit being $\ket 1$ (see Fig.\ \ref{fig:perfect1}). \ Then check whether the control qubit becomes $\ket{-}$. Only $O(1)$ gates are added to $U$, hence the complexity is still $O(T(n))$.\footnote{Note that in the definition of swap complexity (Definition \ref{def:swapability}), it was crucial that the ancilla qubits return to the all-$0$ state. \ Without that requirement, one can check that our construction of the distinguishing circuit in Figure \ref{fig:perfect1} would fail, because the control qubit would be entangled with the ancilla qubits. \ By contrast, the circuit for the other direction of the proof (see Figure \ref{fig:perfect2}) is insensitive to the ancilla state. \ (We thank Daniel Gottesman for this observation.)}
\begin{figure} [H] 
{
\[
\Qcircuit @C=0.7em @R=1em {
\lstick{\ket{+}}  & \ctrl{1}{phase} & \qw & \gate{H} & \qw &\meter
\\
\lstick{\ket{\psi} ~\mathrm{or}~ \ket{\phi}} & \gate{U} & \qw&\qw&\qw&\qw 
}  
\]
}
\caption{\label{fig:perfect1}A circuit distinguishing $\ket \psi$ from $\ket \phi$ using a unitary $U$ that swaps $\ket{x}$ with $\ket{y}$.}
\end{figure}

\noindent $(ii)\rightarrow (i)$\\
Suppose we had a unitary $A$ such that $A\ket{\psi}=\ket{0}\ket{g_\psi}$ and $A\ket{\phi}=\ket{1}\ket{g_\phi}$ where $\ket{g_\psi},\ket{g_\phi}$ are arbitrary states of the remaining $n-1$ qubits. \ Additionally, the circuit complexity of $A$ is $O(T(n))$. \ Then to swap $\ket{x}=\frac{\ket \psi+\ket \phi}{\sqrt{2}}$ and $\ket y=\frac{\ket \psi - \ket \phi}{\sqrt 2}$,  we just apply $A$, then apply a $Z$ gate on the first qubit, and finally uncompute by applying $A^\dagger$ (see Figure \ref{fig:perfect2}). \ Formally,
$$\ket x = \frac{\ket \psi+\ket \phi}{\sqrt{2}} \overset{A}{\rightarrow} \frac{\ket {0}\ket{g_\psi}+\ket {1}\ket{g_\phi}}{\sqrt{2}} \overset{Z_1}{\rightarrow} \frac{\ket {0}\ket{g_\psi}-\ket {1}\ket{g_\phi}}{\sqrt{2}} \overset{A^\dagger}{\rightarrow} \frac{\ket \psi-\ket \phi}{\sqrt{2}} = \ket y.$$
The total circuit complexity is twice the circuit complexity of $A$ and another $Z$ gate, which sums up to $O(T(n))$.

\begin{figure} [H] 
{
\[
\Qcircuit @C=0.7em @R=1.3em {
\lstick{}  &\qw&\multigate{3}{A} &\qw & \gate{Z}  & \qw & \multigate{3}{A^\dagger} & \qw 
\\
\lstick{}  &\qw &\ghost{A} &\qw &\qw & \qw& \ghost{A^\dagger}& \qw 
\\
{\begin{split}  &\ket{x}~\mathrm{or}~\ket{y}  \tiny{\begin{cases} 
\\
\\
\\
\\
\\
\\
\\
\\
\end{cases}} \\ &~\\ &~ \end{split}} ~~~~~~~~~~~~~~~~~~~~~~~~~ & ^{\vdots}~~~ &\ghost{A} &\qw & ^{\vdots} &  & \ghost{A^\dagger} & \qw 
\\
\lstick{}  &\qw &\ghost{A} &\qw &\qw & \qw& \ghost{A^\dagger}& \qw 
}  
\]
}
\caption{\label{fig:perfect2}A circuit implementing $\ket x \leftrightarrow \ket y$ using a distinguishing circuit $A$.}
\end{figure}
\end{proof}

Theorem \ref{thm:perfect} is related to a known equivalence between Hamiltonian simulation and energy measurement \cite{AA17}. \ In terms of \cite{AA17}, the swap unitary is a $\bar \sigma_x$ gate defined on the basis states $\ket{\bar 0}=\ket{x},\ket{\bar 1}=\ket{y}$. \ Similarly, distinguishing between $\ket \phi$ and $\ket \psi$ is equivalent to an energy measurement by the Hamiltonian $\bar \sigma_x$ with precision $\Delta E = 1$. \ Indeed, the circuit in Figure \ref{fig:perfect1} resembles the phase estimation (or energy measurement) circuit by Kitaev, Shen, and Vyalyi \cite{KSV02}, with $U$ as the simulation of the Hamiltonian $\bar \sigma_x$. \ Conversely, the circuit in Figure \ref{fig:perfect2} is a simulation of the Hamiltonian $\bar \sigma_x$ using $A$, where the latter separates $\frac{\bar {\ket{0}}+\bar{\ket{1}}}{\sqrt{2}}$ from $\frac{\bar {\ket{0}}-\bar{\ket{1}}}{\sqrt{2}}$. 

\subsection{Imperfect case}
Note that there are cases where $U$ efficiently maps $\ket x$ to $\ket y$, and a corresponding $U^\dagger$ maps $\ket y$ to $\ket x$, but the same $U$ doesn't do both. \ For example, $\ket x =\ket{0^n}$ and $\ket y=C \ket{0^n}$, where $C$ is some random quantum circuit (see Section \ref{sec:RelCvsSwapC} for a comparison of swap complexity and relative state complexity). \ In such cases, a natural question arises: how well can the circuit $C$ be used to distinguish $\ket \psi$ from $\ket \phi$?

More generally, we might wonder: if we have some ``imperfect'' or ``partial'' ability to swap $\ket x$ and $\ket y$ (for example, a unitary $U$ such that $U\ket x=\ket y$ but $U\ket y\neq \ket x$), what does that imply about our ability to measure the relative phase in $\frac{\ket x \pm \ket y}{\sqrt{2}}$? \ Conversely, what does an imperfect ability to measure the relative phase imply about our ability to swap?

Our main result is as follows:

\begin{theorem}\label{thm:main}~
\begin{enumerate}[label=(\roman*)]
    \item Let $\ket x,\ket y$ be orthogonal $n$-qubit states, and suppose that $\bra y  U \ket x=a$ and $\bra x  U \ket y=b$. \ Then using a single black-box access to controlled-$U$, plus $O(1)$ additional gates, we can distinguish $\ket \psi=\frac{\ket x + \ket y}{\sqrt 2}$ from $\ket \phi=\frac{\ket x - \ket y}{\sqrt 2}$ with bias $\Delta=\frac{\abs{a+b}}{2}$.
    \item Let $\ket \psi,\ket \phi$ be orthogonal $n$-qubit states, and suppose the procedure $A$ accepts $\ket \psi$ with probability $p$ and $\ket{\varphi}$ with probability $p-\Delta$, i.e.\ $A$ distinguishes $\ket \psi$ from $\ket \phi$ with bias $\Delta$. \ Then using a single black-box access each to $A$ and $A^{\dagger}$, and a single additional gate, we can apply a unitary $U$ such that
    $$
    \frac{\abs{\bra y  U \ket x + \bra x  U \ket y}}{2}=\Delta
    $$
    where $\ket x=\frac{\ket \psi + \ket \phi}{\sqrt 2}$ and  $\ket y=\frac{\ket \psi- \ket \phi}{\sqrt 2}$.
\end{enumerate}
\end{theorem}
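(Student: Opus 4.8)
The plan is to reuse, almost verbatim, the two circuits from the perfect case (Figures~\ref{fig:perfect1} and~\ref{fig:perfect2}) and then track amplitudes quantitatively; both directions should come out as exact, single-query reductions.

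For part~(i), I would run the circuit of Figure~\ref{fig:perfect1} on the unknown input $\ket z\in\{\ket\psi,\ket\phi\}$ --- control qubit in $\ket +$, one application of controlled-$U$, Hadamard on the control, measure --- with one extra gate: a relative phase $\mathrm{diag}(1,e^{-i\theta})$ on the control just before the final Hadamard, where $\theta=\arg(a+b)$. A short calculation shows the control is found in $\ket 0$ with probability $\tfrac12\bigl(1+\mathrm{Re}(e^{-i\theta}\bra z U\ket z)\bigr)$. Expanding $\bra\psi U\ket\psi$ and $\bra\phi U\ket\phi$ in the $\{\ket x,\ket y\}$ basis, the ``diagonal'' pieces $\bra x U\ket x$ and $\bra y U\ket y$ occur in both with the same sign, so they drop out of the \emph{difference} of the two acceptance probabilities, which equals $\tfrac12\mathrm{Re}\bigl(e^{-i\theta}(\bra x U\ket y+\bra y U\ket x)\bigr)=\tfrac12\mathrm{Re}(e^{-i\theta}(a+b))=\tfrac{\abs{a+b}}{2}$. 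That is exactly the claimed bias $\Delta$, obtained with one call to controlled-$U$ and $O(1)$ elementary gates.

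For part~(ii), the first step is to pin down the model of the ``procedure $A$'': by a Naimark/Stinespring dilation I may assume $A$ is a unitary on the $n$ system qubits together with ancillas started in $\ket{0\cdots0}$, followed by a standard-basis measurement of one designated answer qubit (accept $=\ket 1$), so that $A^\dagger$ is a legitimate black box. Then, exactly as in Figure~\ref{fig:perfect2}, define $U:=A^\dagger Z_{\mathrm{ans}} A$, the ``single additional gate'' being the $Z$ on the answer qubit. Writing $\ket{a_\psi}:=A\ket\psi\ket{0\cdots0}$ and $\ket{a_\phi}:=A\ket\phi\ket{0\cdots0}$, we have $A\ket x\ket{0\cdots0}=\tfrac1{\sqrt2}(\ket{a_\psi}+\ket{a_\phi})$ and $A\ket y\ket{0\cdots0}=\tfrac1{\sqrt2}(\ket{a_\psi}-\ket{a_\phi})$; expanding $\bra y U\ket x+\bra x U\ket y$ (with the ancillas in $\ket{0\cdots0}$) the cross terms cancel, leaving $\bra{a_\psi}Z_{\mathrm{ans}}\ket{a_\psi}-\bra{a_\phi}Z_{\mathrm{ans}}\ket{a_\phi}$. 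Since $\bra{a_\psi}Z_{\mathrm{ans}}\ket{a_\psi}=1-2p$ and $\bra{a_\phi}Z_{\mathrm{ans}}\ket{a_\phi}=1-2(p-\Delta)$, this is $-2\Delta$, so $\tfrac12\abs{\bra y U\ket x+\bra x U\ket y}=\Delta$.

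The circuits are the same as in the perfect case, so the only real work is the bookkeeping, and that is where I expect the one genuine point of care: in part~(i), checking that the diagonal elements $\bra x U\ket x,\bra y U\ket y$ (about which nothing is assumed) really do cancel in the bias, and that the lone extra phase gate is precisely what promotes $\mathrm{Re}(a+b)$ to $\abs{a+b}$; in part~(ii), making the dilation model of $A$ precise enough that $A^\dagger$ and the ancilla register behave, after which the identity $\bra{a_\psi}Z_{\mathrm{ans}}\ket{a_\psi}=1-2p$ finishes it. No complexity assumption is used; both reductions are exact and use a single query.
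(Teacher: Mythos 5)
Your proof is correct and follows essentially the same route as the paper: for (i) the Hadamard test with a tunable phase (you place a phase gate on the control before the final Hadamard; the paper equivalently uses $\widetilde U = e^{i\theta}U$ or initializes the control to $\ket 0 + e^{i\theta}\ket 1$), and for (ii) the conjugation $U = A^\dagger Z_{\mathrm{ans}} A$. Your bookkeeping in (ii) is marginally cleaner than the paper's, since you compute $\bra{a_\psi}Z_{\mathrm{ans}}\ket{a_\psi} = 1-2p$ and $\bra{a_\phi}Z_{\mathrm{ans}}\ket{a_\phi} = 1-2(p-\Delta)$ directly and note the antisymmetric cross terms cancel, whereas the paper writes out explicit decompositions $A\ket\psi = \sqrt p\ket 1\ket{\psi_1}+\sqrt{1-p}\ket 0\ket{\psi_0}$ (and similarly for $\ket\phi$) and grinds through the resulting inner product; the conclusion is the same.
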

Note that the parameters in the two parts of the theorem are equivalent. \ Given $a,b$ in Theorem \ref{thm:main}$(i)$, we can distinguish $\ket{\psi},\ket{\phi}$ with bias $\Delta =\frac{\abs{a+b}}{2}$,  while with the same distinguishibility bias $\Delta$, we can create a swap with parameters $\widetilde a,\widetilde b$ such that $\frac{\abs{\widetilde a+ \widetilde b}}{2}=\Delta$.

\begin{proof}
For part $(i)$, let $U$ be as follows:
\begin{equation*}
\begin{split}
    U\ket{x}&=\prnt{a\ket{y}+c\ket{x}+f\ket{w}}
    \\
    U\ket{y}&=\prnt{b\ket{x}+d\ket{y}+g\ket{z}},
\end{split}    
\end{equation*}
where $\ket{w},\ket{z}$ are states orthogonal to both $\ket{x}$ and  $\ket{y}$. \ We add a global phase to $U$, and denote it $\widetilde U = e^{i\theta}U$ (alternatively we initialize the ancilla qubit to  $\ket{0}+e^{i\theta}\ket{1}$).

Using the same procedure as in Fig. \ref{fig:perfect1} on the input $\ket{\psi}$, with $\widetilde U$, we get
\begin{equation*}
\begin{split}
    \Pr(\ket{+})&=\frac{1}{2} + \frac{1}{2}\mathrm{Re}\prnt{\bra \psi \widetilde U \ket \psi}
    \\
    &=\frac{1}{2} +\frac{1}{4} \mathrm{Re}\prntt{e^{i\theta}\prnt{\bra x +\bra y}\prnt{a\ket{y}+c\ket{x}+f\ket{w}+b\ket{x}+d\ket{y}+g\ket{z}}}
    \\
    &=\frac{1}{2}+\mathrm{Re}\prnt{e^{i\theta}\cdot \frac{a+b+c+d}{4}},
\end{split}
\end{equation*}
whereas on input $\ket \phi$, we get 
\begin{equation*}
\begin{split}
    \Pr(\ket{+})&=\frac{1}{2} + \frac{1}{2}\mathrm{Re}\prnt{\bra \phi \widetilde U \ket \phi}
    \\
    &=\frac{1}{2} +\frac{1}{4} \mathrm{Re}\prntt{e^{i\theta}\prnt{\bra x -\bra y}\prnt{a\ket{y}+c\ket{x}+f\ket{w}-b\ket{x}-d\ket{y}-g\ket{z}}}
    \\
    &=\frac{1}{2}+\mathrm{Re}\prnt{e^{i\theta}\cdot\frac{-a-b+c+d}{4}}.
\end{split}
\end{equation*}
The difference is $\abs{\frac{\mathrm{Re}\prntt{e^{i\theta}(a+b)}}{2}}$, but we can improve it to $\frac{\abs{a+b}}{2}$ by choosing $\theta=-\mathrm{arg}(a+b)$. \newline
\mbox{}\hfill $\square$

For part $(ii)$, we  use the same circuit as in Fig. \ref{fig:perfect2}. \ Let
\begin{equation*}
\begin{split}
    A\ket \psi &= \sqrt{p} \ket{1}\ket{\psi_1}+\sqrt{1-p}\ket{0}\ket{\psi_0}
    \\
    A\ket \phi &= \sqrt{1-p +\Delta} \ket{0}\ket{\phi_0}+\sqrt{p-\Delta}\ket{1}\ket{\phi_1}.
\end{split}
\end{equation*}
Then,
\begin{equation*}
    A\ket x = \frac{1}{\sqrt{2}}\prntt{\sqrt{p} \ket{1}\ket{\psi_1}+\sqrt{1-p}\ket{0}\ket{\psi_0}+\sqrt{1-p+\Delta} \ket{0}\ket{\phi_0}+\sqrt{p-\Delta}\ket{1}\ket{\phi_1}},
\end{equation*}
which after a phase flip on the first qubit ($Z_1$) yields
\begin{equation*}
    Z_1A\ket x =\frac{1}{\sqrt{2}}\prntt{-\sqrt{p} \ket{1}\ket{\psi_1}+\sqrt{1-p}\ket{0}\ket{\psi_0}+\sqrt{1-p+\Delta} \ket{0}\ket{\phi_0}-\sqrt{p-\Delta}\ket{1}\ket{\phi_1}}.
\end{equation*}
Meanwhile, 
\begin{equation*}
        A\ket y = \frac{1}{\sqrt{2}}\prntt{\sqrt{p} \ket{1}\ket{\psi_1}+\sqrt{1-p}\ket{0}\ket{\psi_0}-\sqrt{1-p+\Delta} \ket{0}\ket{\phi_0}-\sqrt{p-\Delta}\ket{1}\ket{\phi_1}}.
\end{equation*}
The inner product is the following:
\begin{equation*}
\begin{split}
    \bra y A^\dagger Z_1 A \ket x =
    \bra x A^\dagger Z_1 A \ket y^* =
    \frac{1}{2}&\Big[-p+(1-p) -(1-p+\Delta)+(p-\Delta)
    \\
    &+\sqrt{p(p-\Delta)}(\braket{\phi_1}{\psi_1}-\braket{\psi_1}{\phi_1})
    \\
    &+\sqrt{(1-p)(1-p+\Delta)}(\braket{\psi_0}{\phi_0}-\braket{\phi_0}{\psi_0})\Big].
\end{split}
\end{equation*}
Hence,
\begin{equation*}
    \frac{\abs{\bra y A^\dagger Z_1 A \ket x + \bra x A^\dagger Z_1 A \ket y}}{2} = \Delta.
\end{equation*}

\end{proof}
One implication of Theorem \ref{thm:main}$(i)$ is that if $U\ket x=\ket y$, while $U\ket y$ is orthogonal to both $\ket x$ and $\ket y$, then we can distinguish $\ket \psi$ from $\ket \phi$ with bias $\frac{1}{2}$ (because $\abs{a+b} =1$).

Another implication is that, if $U\ket{x}=\ket y$ but $U\ket y = -\ket x$, then $\abs{a+b}=0$ and we get no distinguishing power at all by the method of Theorem \ref{thm:main}$(i)$. \ One might wonder: is this just an artifact of our proof, or are there actual examples of $\ket x$ and $\ket y$ that are easy to swap with a $-1$ phase, but exponentially hard to swap with any other phase (or equivalently, for which it's exponentially hard to distinguish $\frac{\ket x + \ket y}{\sqrt{2}}$ from $\frac{\ket x - \ket y}{\sqrt{2}}$)? \ Perhaps surprisingly, we will show in Section \ref{sec:optimality} that the answer is the latter.

We stress that the ability to swap $\ket x$ and $\ket y$ is \emph{not} equivalent to the ability to distinguish $\ket{x}$ and $\ket{y}$ themselves. \ For example, let $\ket{x}=\ket{0\dots 0}$ and $\ket{y}=\sum_{j\neq 0}\alpha_j\ket{j}$, wherein $\prnttt{\alpha_j}$ are arbitrary (e.g., $\ket y$ is a Haar-random state). \ Then it's trivial to distinguish $\ket{x}$ from $\ket{y}$, yet \emph{mapping} $\ket{x}$ to $\ket{y}$ will in general be extremely difficult. \ Conversely, let $\ket{\psi}=\frac{\ket x+\ket y}{\sqrt 2}$ and $\ket{\phi}=\frac{\ket{x}-\ket{y}}{\sqrt 2}$. \ Then it's trivial to \emph{map} $\ket \psi$ to $\ket{\phi}$ and vice versa. \ But we know, by Theorem \ref{thm:main}, that \emph{distinguishing} the two must in general be extremely difficult. 

The general rule is that distinguishability in one basis implies ``swappability'' in a conjugate basis, and vice versa. \ (Note that Theorem \ref{thm:main} would also have worked with, e.g., $\ket{\psi}=\frac{\ket x +i\ket y}{\sqrt 2}$ and $\ket{\phi}=\frac{\ket x -i\ket y}{\sqrt 2}$, rather than $\ket{\psi}=\frac{\ket x +\ket y}{\sqrt 2}$ and $\ket{\phi}=\frac{\ket x -\ket y}{\sqrt 2}$.)

\section{Tightness\label{sec:optimality}}

A natural question about Theorem \ref{thm:main} is whether a better construction could yield better parameters. \ For example, given a $U$ such that $U\ket{x}=\ket y$ and $U\ket y$ is orthogonal to both $\ket x$ and $\ket y$, could we use it to distinguish $\ket{\psi}=\frac{\ket x + \ket y}{\sqrt 2}$ from $\ket{\phi}=\frac{\ket x - \ket y}{\sqrt 2}$ \emph{perfectly}, and with the same complexity?

We now prove that, in general, the parameters of Theorem \ref{thm:main} are optimal. \ Perhaps surprisingly, this optimality result does not depend on any unproved conjectures in complexity theory.

\begin{theorem} ~\\
\label{thm:optimality} 
\vspace*{-\baselineskip} 
\begin{enumerate} [label=(\roman*)]
    \item For all $0\le b\le a \le 1$, there exists an $n$-qubit $U$ implemented by a size-$O(1)$ circuit, as well as states $\ket{x}, \ket{y}$, such that $\bra y  U \ket x=a$ and $\bra x  U \ket y=b$, and yet if $\bra y  V \ket x=a'$ and $\bra x  V \ket y=b'$ where $|a'+b'|\ge|a+b|+\omega(2^{-n/3}\sqrt{\log n})$, then $V$ requires a size-$\omega(2^{n/3})$ circuit. 
    \item For all $\Delta \in [0,1]$, there exist two $n$-qubit states $\ket{\psi},\ket{\phi}$ that can be distinguished with bias $\Delta$ by a size-$O(1)$ circuit, yet such that distinguishing them with bias $\Delta+\omega(2^{-n/3}\sqrt{\log n})$ requires a size-$\omega(2^{n/3})$ circuit.
\end{enumerate}
We assume here that the size of the universal set of gates is polynomial in the number of qubits.
\end{theorem}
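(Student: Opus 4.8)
\emph{Strategy.} The plan is to establish part~(i) directly by an explicit construction and then to read off part~(ii) from it via Theorem~\ref{thm:main}: a size-$s$ circuit that distinguishes $\frac{\ket x+\ket y}{\sqrt2}$ from $\frac{\ket x-\ket y}{\sqrt2}$ with bias $\Delta$ gives, by Theorem~\ref{thm:main}(ii), a size-$O(s)$ unitary $V$ with $\frac{|\bra y V\ket x+\bra x V\ket y|}{2}=\Delta$, and conversely (Theorem~\ref{thm:main}(i)) such a $V$ yields a size-$O(s)$ distinguisher of bias $\Delta$; taking $a=b=\Delta$ in part~(i) then produces part~(ii). So it suffices to exhibit, for each $0\le b\le a\le1$, orthogonal $n$-qubit states $\ket x,\ket y$ and a constant-size circuit $U$ with $\bra y U\ket x=a$ and $\bra x U\ket y=b$, such that with high probability no circuit $V$ of size $O(2^{n/3})$ has $|\bra y V\ket x+\bra x V\ket y|\ge a+b+\omega(2^{-n/3}\sqrt{\log n})$.

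\emph{The construction.} Pick a constant-size circuit $U$ with $U^4=I$, $\tr U=0$, $\operatorname{Im}\tr(U^2)=0$, and $\tr(U^2)=\tfrac ba\,2^n$; concretely one lets $U$ act on $O(1)$ qubits (tensored with the identity) with eigenvalues drawn from $\{1,i,-1,-i\}$ with the appropriate multiplicities, which is exact for a dense set of $(a,b)$ and otherwise off by an exponentially small amount that is absorbed into the error term. Let $\ket x$ be a Haar-random $n$-qubit state and $\ket J$ a Haar-random state orthogonal to $\operatorname{span}\{\ket x,U\ket x\}$, and set $\ket y=a\,U\ket x+\sqrt{1-a^2}\,\ket J$. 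Then $\braket xy\approx 0$ (since $\bra x U\ket x\approx\tr U/2^n=0$ and $\braket xJ=0$), so $\ket x,\ket y$ are orthogonal after a negligible adjustment; moreover $\bra y U\ket x=a$ holds exactly because $\ket J\perp U\ket x$, and $\bra x U\ket y=a\bra x U^2\ket x+\sqrt{1-a^2}\bra x U\ket J\approx a\,\tr(U^2)/2^n=b$, which a tiny rotation of $\ket J$ makes exact. Thus a constant-size $U$ realizes the claimed $(a,b)$.

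\emph{The analysis.} For an arbitrary $n$-qubit unitary $V$, expanding $\ket x,\ket y$ and using that $\ket x$ and $\ket J$ behave as independent Haar vectors gives
\[
\bra y V\ket x+\bra x V\ket y \;=\; a\cdot\frac{\tr\!\bigl((U+U^\dagger)V\bigr)}{2^n}\;+\;\text{(four fluctuation terms),}
\]
where each fluctuation term has the form $\bra\phi M\ket\phi-\tfrac1{2^n}\tr M$ or $\bra\phi M\ket\chi$ for some fixed $M$ with $\norm M\le1$ and Haar (or independent-Haar) vectors $\ket\phi,\ket\chi$. The ``main term'' is at most $a\,\norm{U+U^\dagger}_1/2^n$ in absolute value, and the crucial point is that $U^4=I$ forces every singular value of $U+U^\dagger$ to be $0$ or $2$, so that $\norm{U+U^\dagger}_1/2^n=1+\tr(U^2)/2^n=1+\tfrac ba$; hence the main term is $\le a+b$, a value attained by $U$ itself, which is exactly why no small circuit can beat the baseline $|a+b|$. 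For the fluctuations, Lévy's lemma gives, for each fixed $V$, that their sum exceeds $t$ with probability at most $\exp(-\Omega(2^n t^2))$ over $(\ket x,\ket J)$; since the gate set has size $\operatorname{poly}(n)$ there are at most $(\operatorname{poly}(n)\cdot n^2)^s=2^{O(s\log n)}$ circuits of size $\le s$, so a union bound shows that with probability $1-2^{O(s\log n)-\Omega(2^n t^2)}$ \emph{every} size-$s$ circuit has fluctuation sum $\le t$. Taking $s=\Theta(2^{n/3})$, this tends to $1$ as soon as $t=\omega(2^{-n/3}\sqrt{\log n})$; fixing one such pair $(\ket x,\ket J)$ then yields the desired statement, and part~(ii) follows as above.

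\emph{Main obstacle.} I expect the crux to be pinning down the correct ``baseline'' structure --- the condition $U^4=I$ together with the trace conditions. This is precisely what defeats the natural attack of reflecting in the eigenbasis of $U$ with $\pm1$ signs (itself a constant-size circuit), which would otherwise over-rotate $\ket x\to\ket y$ and $\ket y\to\ket x$ simultaneously; making the singular-value bookkeeping come out exactly equal to $|a+b|$ is the delicate part. A secondary nuisance is realizing $(a,b)$ exactly for every $0\le b\le a\le1$ with a genuinely constant-size circuit (there are integrality/cycle-structure constraints on the eigenvalue multiplicities of $U$); this is handled by the freedom in $\ket J$ together with a density argument and does not affect the exponents.
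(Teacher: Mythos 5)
Your reduction of part~(ii) to part~(i) is correct and matches the paper's. Your proof of part~(i), however, takes a genuinely different route from the paper's and has a real gap in the construction.

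The paper realizes $(a,b)$ by building the amplitudes $\sqrt{a-b},\sqrt b,\sqrt c$ directly into the \emph{states}, which are superpositions over eight Haar-random branches $\ket{\bar k}=\ket k\ket{\eta_k}$, while $U$ is a \emph{fixed} diagonal phase on the 3-qubit index register; the hardness lemma then says a small circuit acts essentially diagonally on the $\ket{\bar k}$. You instead put the burden on $U$: you take $\ket x$ Haar-random, $\ket y=aU\ket x+\sqrt{1-a^2}\ket J$, and use the Hölder bound $\abs{\tr((U+U^\dagger)V)}/2^n\le\norm{U+U^\dagger}_1/2^n$ together with the observation that $U^4=I$ forces the singular values of $U+U^\dagger$ into $\{0,2\}$, giving the clean identity $\norm{U+U^\dagger}_1/2^n=1+\tr(U^2)/2^n$. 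That Hölder-plus-singular-value idea is nice and, if it could be set up, would be a cleaner separation of ``main term'' from ``fluctuation'' than the paper's.

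The gap is exactly where you flag a ``secondary nuisance,'' and it is not secondary. Your construction requires a constant-size circuit $U$ with $U^4=I$, $\tr U=0$, and $\tr(U^2)=\tfrac ba\,2^n$. With $O(1)$ one- and two-qubit gates, $U$ acts nontrivially on some $k=O(1)$ qubits, so $U=U'\otimes I_{2^{n-k}}$ with $U'$ a $k$-qubit unitary whose eigenvalues lie in $\{1,i,-1,-i\}$; the multiplicity constraints $m_1=m_{-1}$, $m_i=m_{-i}$, $m_1+m_i=2^{k-1}$ force $\tr(U^2)/2^n$ to lie on the grid $\{(4m_1-2^k)/2^k\}$ of spacing $2^{-(k-2)}$. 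For constant $k$ this is a \emph{finite}, coarse set, not a dense one, and for $b/a$ off this grid the discrepancy is $\Theta(1)$, not exponentially small. Rotating $\ket J$ cannot fix this cheaply: forcing $\sqrt{1-a^2}\bra x U\ket J=b-a\bra x U^2\ket x$ requires $\ket J$ to have a $\Theta(1)$ component along $U^\dagger\ket x$, which turns the supposed ``fluctuation'' $\sqrt{1-a^2}(\bra J V\ket x+\bra x V\ket J)$ into a second $\Theta(1)$ deterministic contribution that your main-term accounting no longer controls. A genuine fix would be to enrich the state rather than the unitary, e.g.\ $\ket y=\alpha U\ket x+\beta U^\dagger\ket x+\gamma\ket J$ with $\alpha,\beta$ chosen so that $\alpha+\beta r=a$ and $\alpha r+\beta=b$ for $r=\tr(U^2)/2^n$ fixed; one can check that the main term then becomes $(\alpha+\beta)\tr((U+U^\dagger)V)/2^n$ and $(\alpha+\beta)(1+r)=a+b$, so the Hölder bound again lands on $a+b$ --- but this is a different construction from the one you wrote and needs to be carried out carefully, so as written part~(i) is not proved for general $(a,b)$.

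(Two minor points: your fluctuation operator $M=U^\dagger V+VU$ has $\norm M\le2$, not $\le1$, which only changes constants; and the union-bound count $2^{O(s\log n)}$ and the resulting threshold $t=\omega(2^{-n/3}\sqrt{\log n})$ at $s=\Theta(2^{n/3})$ are both correct and match the paper's Lemma~\ref{lem:Haar projection}.)
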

\begin{proof}
By our main result, Theorem \ref{thm:main}, we only need to prove part (i). \ Part (ii) then follows automatically, if we set $\ket\psi=\frac{\ket x + \ket y}{\sqrt{2}}$ and $\ket\phi=\frac{\ket x - \ket y}{\sqrt{2}}$ and $\Delta = \frac{\abs{a+b}}{2}$.

Let $\ket{\eta_0},\dots,\ket{\eta_7}$ be $n$-qubit states, whose pairwise swap complexity is exponential. \ For example, let them be Haar-random; then by a counting argument, all the pairwise swap complexities will clearly be exponential with overwhelming probability.

We add a 3-qubit index register to each $\ket{\eta_k}$, and write the entire state as $\ket{\bar k}\triangleq \ket{k}\otimes \ket{\eta_k}$. \ Consider the following construction:
\begin{equation} \label{eq:TightnessConstruction}
\begin{split} 
\ket{x}&=\sqrt{a-b} \prnt{  \frac{\ket{\bar 0}+\ket{\bar 1} +\ket{\bar 2}+\ket{\bar 3}}{2}} + \sqrt{b}\prnt{\frac{\ket{\bar 4}+\ket{\bar 5}}{\sqrt 2}} +\sqrt c \ket{\bar 6}
\\
\ket{y}&=\sqrt{a-b} \prnt{\frac{\ket{\bar 0}+i\ket{\bar 1} -\ket{\bar 2}-i\ket{\bar 3}}{2} }+ \sqrt{b}\prnt{\frac{\ket{\bar 4}-\ket{\bar 5}}{\sqrt 2}}+\sqrt c \ket{\bar 7}.
\\
U&= \prnt{\sum_{k=0}^3 i^k \ketbra{\bar k}{\bar k}+\ketbra{ \bar4}{ \bar4} -\ketbra{ \bar5}{ \bar5} +\ketbra{ \bar6}{ \bar6}+\ketbra{ \bar7}{ \bar7}}\otimes \mathbbm{1}_{2^n},
\end{split}
\end{equation}
wherein by normalization, $\braket{x}{x}=a+c=1$. \ To understand the construction, note that $U$ transfers the superposition of the first four states of $\ket x$ to the corresponding superposition in $\ket y$. \ In contrast, it transfers the superposition of the first four states in $\ket y$ to a superposition orthogonal to both $\ket x$ and $\ket y$. \ Next, $U$ applies $\ket{\bar 4}+\ket{\bar 5}\longleftrightarrow \ket{\bar 4}-\ket{\bar 5}$, and finally, $U$ does not affect the states $\ket{\bar 6}$ and $\ket{\bar 7}$.    

$U$ can be implemented using $O(1)$ gates since it acts only on the index register. \ Furthermore, it is easy to verify that 
$\bra{y}U\ket{x}=a$ and $\bra{x}U\ket{y}=b$. 

We will need the following lemma, proved in Appendix \ref{sec:ProofLemHaar}. 
\begin{lemma}\label{lem:Haar projection}
Let $\ket{\eta_0},\ket{\eta_1}$ be two $n$-qubit Haar-random states, and let $g=n^{O(1)}$ be the size of a universal set of gates $G$. Then with $1-\exp(-\exp(n))$ probability over $\ket{\eta_0},\ket{\eta_1}$, there is no circuit $C$ with  $M=O(2^{n/3})$ gates from $G$, such that $\abs{\bra {\eta_0} C \ket{\eta_1}}\ge\varepsilon$, where $\varepsilon \le  \sqrt{M\log g/N}=O(2^{-n/3}\sqrt{\log n}).$
\end{lemma}

By Lemma \ref{lem:Haar projection}, due to the pairwise swap-complexity of the $\prnttt{\ket{\eta_k}}$ states, any   unitary $\widetilde U$, implemented by $O(2^{n/3})$ gates, with overwhelming probability cannot transform $\ket{\bar k}$ into anything close to $\ket{\bar k'}$, for any $k\neq k'$. \ Hence, the principal submatrix of $\widetilde U$ when removing all columns except those corresponding to the $\prnttt{\ket{\bar k}}$ states is necessarily in the following form:
$$
\widetilde U \bigg\rvert_{\prnttt{\ket{\bar k}}} = \sum_{k=0}^7 \beta_k e^{i\theta_k} \ketbra{\bar k}{\bar k} + \widetilde O(2^{-n/3}),
$$
wherein $\beta_k\in[0,1]$, and $\widetilde O$ is a big-$O$ notation which ignores logarithmic factors. \ Calculating $\widetilde a+ \widetilde b$,
\begin{equation*}
    \begin{split}
\widetilde a &= \bra{y}\widetilde U \ket{x}= \frac{a-b}{4}\sum_{k=0}^3 \beta_k(-i)^k e^{i\theta_k} + \frac{b}{2}(\beta_4 e^{i\theta_4}-\beta_5 e^{i\theta_5})+\widetilde O(2^{-n/3}),
\\
\widetilde b &= \bra{x}\widetilde U \ket{y}= \frac{a-b}{4}\sum_{k=0}^3 \beta_k i^k e^{i\theta_k} + \frac{b}{2}(\beta_4 e^{i\theta_4}-\beta_5 e^{i\theta_5}) + \widetilde O(2^{-n/3}).
    \end{split}
\end{equation*}
By carefully choosing $\theta_k$, and taking $\beta_k=1$, we  upper-bound the absolute sum:
$$
\abs{\widetilde a+\widetilde b}=\abs{\frac{a-b}{2}\prnt{\beta_0 e^{\theta_0}-\beta_2 e^{i\theta_2}}+ b\prnt{\beta_4 e^{i\theta_4}-\beta_5 e^{i\theta_5}}}+\widetilde O(2^{-n/3}) \le \abs{a+b}
+\widetilde O(2^{-n/3}),$$
which proves the theorem. 

\end{proof}

Theorem \ref{thm:optimality} implies, in particular, that there exist cases where we can efficiently map $\ket x$ to $\ket y$, but only via a circuit that also maps $\ket y$ to $-\ket x$, and where eliminating the $-1$ factor requires an exponentially larger circuit. \ In these cases, and \textit{only} in these cases, we get efficient mapping between $\ket x$ and $\ket y$, without any corresponding ability to distinguish $\frac{\ket x + \ket y}{\sqrt{2}}$ from $\frac{\ket x - \ket y}{\sqrt{2}}$ efficiently.

Having said that, in the \textit{specific} case of Schr\"{o}dinger's cat, if we have the ability to map $\ket{\mathrm {Alive}}$ to $\ket{\mathrm {Dead}}$ and $\ket{\mathrm {Dead}}$ to $-\ket{\mathrm {Alive}}$, then we also have the ability to swap the $\ket{\mathrm {Alive}}$ and $\ket{\mathrm {Dead}}$ states without the $-1$ relative phase. \ The reason is that it's easy enough to \textit{distinguish} a live cat from a dead one, so we could simply correct the phase after applying the swap, conditional on being in the $\ket{\mathrm {Alive}}$ state. \ We thus see that, in the proof Theorem \ref{thm:optimality}, it was crucial to consider pairs of states that are exponentially hard not only to swap, but \textit{also} to distinguish from each other. \ (We thank Ed Witten for this observation.)

\section{Relative state complexity vs.\ swap complexity \label{sec:RelCvsSwapC}}

The following corollary summarizes the relation between the circuit complexity of swapping two states, their relative complexity and their absolute state complexity. 
\begin{corollary} \label{col:inequalities}
Consider two orthogonal states $\ket{x},\ket y$, and let $\ket{\psi}=\frac{\ket x+\ket y}{2}$ and $\ket{\phi}=\frac{\ket x - \ket y}{2}$. \ Then,
\begin{equation*}
    \mathcal C (\ket{x},\ket y)\le 
    \mathcal S (\ket x, \ket y)
    \le
    \min\big[{\mathcal C (\ket\psi,\ket{0\dots0} ),\mathcal C (\ket{\phi},~\ket{0\dots 0})}\big]  
\end{equation*}
(ignoring constant factors and $\varepsilon$ the subscripts of $\mathcal C,
\mathcal S$). The separation of the inequalities can be exponential.  
\end{corollary}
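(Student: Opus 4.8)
The plan is to prove the two inequalities by explicit constructions and then read off the exponential separations from examples already built in the paper. Throughout I use the normalized conjugates $\ket\psi=\frac{\ket x+\ket y}{\sqrt2}$ and $\ket\phi=\frac{\ket x-\ket y}{\sqrt2}$ (the ``$2$'' in the statement should be ``$\sqrt2$''). For the first inequality, $\mathcal C(\ket x,\ket y)\le\mathcal S(\ket x,\ket y)$, note that a circuit $C$ witnessing $\mathcal S_\varepsilon(\ket x,\ket y)$ is in particular a circuit that sends $\ket x$ close to $\ket y$: writing $u=\bra x\bra{0\dots0}C\ket y\ket{0\dots0}$ and $v=\bra y\bra{0\dots0}C\ket x\ket{0\dots0}$, the condition $\frac{\abs{u+v}}{2}\ge1-\varepsilon$ together with $\abs u,\abs v\le1$ forces $\abs v\ge1-2\varepsilon$, hence $\abs v^2\ge1-4\varepsilon$. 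So the same $C$ witnesses $\mathcal C_{4\varepsilon}(\ket x,\ket y)$, and ignoring the $\varepsilon$-subscripts we are done.

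For the middle inequality, I would give a ``reflection'' construction --- essentially the $(ii)\to(i)$ direction of Theorem~\ref{thm:perfect}. Let $C$ be a circuit achieving $\mathcal C(\ket\psi,\ket{0\dots0})$, so that $C$ maps the all-zeros state $\ket{\mathbf 0}$ of the $(n+\mathrm{anc})$-qubit register to $\ket\psi\ket{0\dots0}_{\mathrm{anc}}$ (perfect case; the $\varepsilon$-approximate case is analogous up to a constant-factor loss in the error). Set $V=C\,R\,C^\dagger$ with $R=2\ketbra{\mathbf 0}{\mathbf 0}-\mathbbm{1}$ the reflection about $\ket{\mathbf 0}$. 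Since $\ket\psi\perp\ket\phi$ and $C^\dagger$ is unitary, $C^\dagger(\ket\psi\ket{0\dots0}_{\mathrm{anc}})=\ket{\mathbf 0}$ while $C^\dagger(\ket\phi\ket{0\dots0}_{\mathrm{anc}})$ is orthogonal to $\ket{\mathbf 0}$; hence $V\ket\psi=\ket\psi$ and $V\ket\phi=-\ket\phi$ on the relevant branches, with all ancillas restored (because $R$ is diagonal in the computational basis and $C$ uncomputes them). Therefore $V$ perfectly swaps $\ket x=\frac{\ket\psi+\ket\phi}{\sqrt2}$ and $\ket y=\frac{\ket\psi-\ket\phi}{\sqrt2}$. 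Its size is $2\abs C+\abs R$, and $R$ --- a multiply-controlled phase --- costs $O(n+\abs C)$ gates (e.g.\ a Toffoli ladder with one clean ancilla), so $\mathcal S(\ket x,\ket y)=O(\mathcal C(\ket\psi,\ket{0\dots0})+n)$. Running the argument from a preparation circuit for $\ket\phi$ instead gives $\mathcal S(\ket x,\ket y)=O(\mathcal C(\ket\phi,\ket{0\dots0})+n)$, and we take the smaller. The additive $O(n)$ is negligible next to the exponential separations at issue and is folded into the ``up to constant factors'' convention; a fully precise statement reads $\mathcal S\le O(\min[\cdot,\cdot]+n)$.

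For the exponential separations I would recycle constructions from the paper. To separate $\mathcal C$ from $\mathcal S$: take the construction of Theorem~\ref{thm:optimality}(i) with $a=1$, $b=0$ (hence $c=0$), giving an $O(1)$-size $U$ with $U\ket x=\ket y$ but $U\ket y$ orthogonal to $\mathrm{span}\{\ket x,\ket y\}$, built on Haar-random blocks $\ket{\bar k}=\ket k\otimes\ket{\eta_k}$; then $\mathcal C(\ket x,\ket y)=O(1)$, while by Lemma~\ref{lem:Haar projection} any size-$O(2^{n/3})$ circuit is, up to $\widetilde O(2^{-n/3})$, block-diagonal on the $\ket{\bar k}$, so its swap parameters obey $\abs{\widetilde a+\widetilde b}\le\abs{a+b}+\widetilde O(2^{-n/3})=1+o(1)<2-2\varepsilon$ for any fixed $\varepsilon<1/2$; hence no such circuit meets the swap criterion and $\mathcal S(\ket x,\ket y)=2^{\Omega(n)}$. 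To separate $\mathcal S$ from $\min[\mathcal C(\ket\psi,\ket{0\dots0}),\mathcal C(\ket\phi,\ket{0\dots0})]$: take $\ket x=\ket\chi\ket0$ and $\ket y=\ket\chi\ket1$ with $\ket\chi$ a Haar-random $(n-1)$-qubit state; then a single $X$ on the last qubit swaps them, so $\mathcal S(\ket x,\ket y)=O(1)$, whereas $\ket\psi=\ket\chi\ket+$ and $\ket\phi=\ket\chi\ket-$ each have state complexity $2^{\Omega(n)}$ by a counting argument (preparing either essentially requires preparing the Haar-random $\ket\chi$), so the minimum is $2^{\Omega(n)}$.

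The ``hard part,'' such as it is, lies entirely in the middle inequality: verifying that $V=CRC^\dagger$ is a \emph{legitimate} swap circuit --- i.e.\ that its ancillas return to $\ket{0\dots0}$, which holds because $R$ is diagonal in the computational basis --- and deciding how to account for the $O(n)$ cost of the reflection $R$. The first inequality is pure $\varepsilon$-bookkeeping, and the separations are direct appeals to Lemma~\ref{lem:Haar projection} and the construction inside the proof of Theorem~\ref{thm:optimality}.
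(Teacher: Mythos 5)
Your proof is correct and follows essentially the same route as the paper: the first inequality is handled by the trivial observation that a swap circuit in particular maps $\ket x$ to $\ket y$ (you just spell out the $\varepsilon$-bookkeeping the paper elides), and your two exponential separations are the paper's own examples (Eq.~\ref{eq:TightnessConstruction} with $b=c=0$, and $\ket x,\ket y$ a Haar-random state tensored with $\ket 0$ vs.\ $\ket 1$). The only cosmetic difference is in the middle inequality: the paper appeals to Theorem~\ref{thm:main} (swap complexity $\le$ distinguishing complexity) and observes that $A^\dagger$ followed by a zero-check distinguishes $\ket\psi$ from $\ket\phi$, whereas you build the swap $V=CRC^\dagger$ directly via a reflection about $\ket{\mathbf 0}$; but unwinding the paper's $A^\dagger Z_1 A$ with $A=(\text{zero-check})\circ C^\dagger$ yields exactly your $CRC^\dagger$, so the underlying circuit is the same, and your explicit accounting of the $O(n)$ cost of the multiply-controlled phase (which the paper also silently incurs in its ``check'' step) is a welcome bit of extra care rather than a divergence.
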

\begin{proof}
The first inequality is trivial, since swapping is at least as hard as mapping. \ For the second inequality, recall by Theorem \ref{thm:main} that $\mathcal S (\ket x, \ket y)$ is at most the complexity of distinguishing $\ket \psi$ from $\ket \phi$. Let $A$ be a minimal circuit to prepare $\ket\psi$ from $\ket{0^n}$. \ Then to distinguish $\ket \psi$ from $\ket \phi$, we simply apply $A^\dagger$ to $\ket \psi$, and check whether we got back to $\ket{0^n}$ (and similarly given a minimal circuit to prepare $\ket\phi$).

For an exponential separation of the first inequality, consider Equation \ref{eq:TightnessConstruction} with $b=c=0$. \ The unitary $U$ transfers $\ket x$ to $\ket y$ efficiently, hence  $\mathcal C(\ket{x,y})= O(1)$. \ On the other hand, $\mathcal S(\ket{x,y})$ can be exponential due to the tightness theorem (Theorem \ref{thm:optimality}). \ For an exponential separation of the second inequality, consider $\ket{x}=\ket{0}\ket{\eta}$ and $\ket{y}=\ket{1}\ket{\eta}$ where $\ket{\eta}$ is a Haar-random state. \ Then $\mathcal S(\ket{x},\ket{y})=O(1)$, but preparing either $\ket x$ or $\ket y$ requires an exponentially large complexity with overwhelming probability.
\end{proof}

Interestingly, while relative state complexity is a metric, swap complexity is not. \ Swap complexity is a ``semimetric'': it's symmetric and reflexive, but does not satisfy the triangle inequality, as shown by the following counterexample.

Consider the following $3$ states:
\begin{equation}
    \ket{x}=\ket{000}\quad\quad \ket{y}=\ket{1--}\quad\quad \ket{z}=\ket{011},
\end{equation}
and the universal set of gates: Hadamard, NOT, CNOT and a phase gate $R_{\phi}=\ketbra{0}{0}+e^{i\phi}\ketbra{1}{1}$, with $\phi\ll1$. \ It is easy to see that 
\begin{equation}
\mathcal S_0(\ket x,\ket z)=2 \quad \quad \mathcal S_0(\ket z,\ket y)=3,
\end{equation}
where $\mathcal S_0$ is the complexity of swapping  $\ket x$ and $\ket y$ with zero error. \
Our exhaustive search found that the smallest circuit for  perfectly swapping $x$ and $y$ is of size $7$ (see Figure \ref{fig:counterexample}).
\begin{figure} [H] 
{
\[
\Qcircuit @C=0.7em @R=1em {
\lstick{}  & \gate{X} & \qw & \qw & \qw & \qw & \qw
\\
\lstick{}  & \gate{X} & \qw&\gate{H}&\qw&\gate{X} & \qw
\\
\lstick{}  & \gate{X} & \qw&\gate{H}&\qw&\gate{X} & \qw
}  
\]
}
\caption{\label{fig:counterexample}A 7-gate circuit to swap $    \ket{000}$ and $\ket{1--}$. }
\end{figure}
Hence, 
\begin{equation}
\mathcal S_0(\ket x,\ket y)>\mathcal S_0(\ket x,\ket z)+\mathcal S_0(\ket z,\ket y)
\end{equation}

Note that swap complexity does satisfy the triangle inequality in the special case where $\ket{x},\ket{y},\ket{z}$ are all computational basis states.

\section{Discussion}
By using quantum circuit complexity, we were able to formalize a folklore observation in the foundations of quantum mechanics: namely, that the ability to measure the coherence in Schr\"{o}dinger's cat is somehow related to the ability to bring a dead cat back to life. \ We were also able to articulate in precisely what circumstances that folklore observation would become false. \ Our results inspired a more general investigation of \textit{swap complexity} of pairs of quantum states, which is related to their relative complexity but can be exponentially greater, and which might be independent interest.

Our equivalence theorem has some interesting implications for physics. \ For example, if we have a superposition of a state $\ket{x}$ of polynomial complexity and a state $\ket{y}$ of exponential complexity, then no polynomial-time experiment can ever detect the relative phase between $\ket{x}$ and $\ket y$. \ (For otherwise, we could efficiently \textit{map} $\ket{x}$ to $\ket y$!)

In a previous work \cite{Aaronson16}, Aaronson and Susskind proved that evolving the state 
$$\ket{\psi_0}=\frac{1}{\sqrt{2^n}} \sum_{j\in\prnttt{0,1}^n} \ket{j}\otimes\ket j$$
by a ``generic'' (computationally universal) Hamiltonian $H$ for exponential time yields a state with superpolynomial circuit complexity unless $ \mathsf{\PSPACE}\subset \mathsf{\PP/\poly}$. \ Combining that result with our Theorem \ref{thm:main} and Corollary \ref{col:inequalities} means that unless $ \mathsf{\PSPACE}\subset \mathsf{\PP/\poly}$, there can be no feasible experiment, in general, to measure the phase between a state and same state after being evolved for exponential time. \ Even if mapping one state to the other is merely ``thermodynamics-hard,'' in the sense that it's hard to unscramble an egg, still, distinguishing the superposition from the incoherent mixture with any non-negligible bias would be thermodynamics-hard as well.

One might wonder about the apparent symmetry of our results in the case of Sch\"odinger's cat, since reviving a cat seems so much harder than taking its life. \ However note that in this work, both $\ket{\mathrm {Alive}}$ and $\ket{\mathrm {Dead}}$ are taken to determine the exact states of every atom of the cat. \ If we accounted for other possible ``alive'' and ``dead'' states, then of course we expect many more configurations of dead cats than alive cats, so thermodynamics suffices to explain why killing a cat is so much easier than reviving one.

\section{Acknowledgments}
We thank Edward Witten for the comment at the end of Section \ref{sec:optimality}, Daniel Gottesman for the note in the proof of Theorem \ref{thm:perfect}, Yosi Avron for pointing us to  the question of symmetry of swap complexity and relative state complexity in the discussion, and Henry Yuen  for catching some errors in an earlier draft.

\bibliography{bib}

\appendix

\section{Appendix: Proof of Lemma \ref{lem:Haar projection}\label{sec:ProofLemHaar}}
\setcounter{lemma}{0}
\begin{lemma}
Let $\ket{\eta_0},\ket{\eta_1}$ be two $n$-qubit Haar-random states, and let $g=n^{O(1)}$ be the size of a universal set of gates $G$. Then with $1-\exp(-\exp(n))$ probability over $\ket{\eta_0},\ket{\eta_1}$, there is no circuit $C$ with  $M=O(2^{n/3})$ gates from $G$ such that $\abs{\bra {\eta_0} C \ket{\eta_1}}\ge\varepsilon$, where $\varepsilon \le  \sqrt{M\log g/N}=O(2^{-n/3}\sqrt{\log n}).$
\end{lemma}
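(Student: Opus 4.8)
The plan is a first-moment (union-bound) argument: show that for each \emph{fixed} circuit $C$ of size at most $M$ the overlap $\abs{\bra{\eta_0}U_C\ket{\eta_1}}$ exceeds $\varepsilon$ only with doubly-exponentially small probability over the Haar-random pair $(\ket{\eta_0},\ket{\eta_1})$, and then sum over the finitely many such circuits. For the single-circuit estimate I would first fix a circuit $C$ on $n$ qubits with unitary $U_C$. Because the Haar measure is unitarily invariant and $\ket{\eta_0},\ket{\eta_1}$ are independent, $U_C\ket{\eta_1}$ is again Haar-random and independent of $\ket{\eta_0}$; conditioning on $\ket{\eta_0}$ and rotating it to a fixed basis vector, $\abs{\bra{\eta_0}U_C\ket{\eta_1}}$ then has exactly the distribution of $\abs{u_1}$, the modulus of the first coordinate of a Haar-random unit vector $u\in\mathbb{C}^{N}$ with $N=2^{n}$. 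I would then invoke the standard fact that $\abs{u_1}^{2}\sim\mathrm{Beta}(1,N-1)$ --- equivalently, that the spherical cap $\{\abs{u_1}^{2}\ge t\}$ has relative measure $(1-t)^{N-1}$ --- to get the clean tail $\Pr\!\big[\abs{\bra{\eta_0}U_C\ket{\eta_1}}\ge\varepsilon\big]=(1-\varepsilon^{2})^{N-1}\le e^{-\varepsilon^{2}(N-1)}$.

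Next I would count circuits: a circuit of size at most $M$ on $n$ qubits is a length-$\le M$ sequence of choices, each a gate type of $G$ together with the wires it acts on, so there are at most $g^{O(M)}$ of them (with $g=n^{O(1)}$ already absorbing the $\mathrm{poly}(n)$ many wire placements). A union bound over this finite family then yields
\begin{equation*}
\Pr\!\Big[\exists\,C,\ \abs{C}\le M:\ \abs{\bra{\eta_0}U_C\ket{\eta_1}}\ge\varepsilon\Big]\ \le\ g^{O(M)}\,e^{-\varepsilon^{2}(N-1)}\ =\ \exp\!\big(O(M\log g)-\varepsilon^{2}(N-1)\big).
\end{equation*}
Choosing $\varepsilon^{2}$ to be a sufficiently large constant times $M\log g/N$ would make the exponent at most $-\Omega(M\log g)$; since $M=\Theta(2^{n/3})$ and $g\ge 2$, the failure probability is $\exp(-\Omega(2^{n/3}))$, i.e.\ doubly-exponentially small in $n$, and the threshold is $\varepsilon=O(\sqrt{M\log g/N})=O(2^{-n/3}\sqrt{\log n})$, as claimed. (This doubly-exponential bound also comfortably absorbs the $\binom{8}{2}$ extra union bounds, one per pair $\ket{\eta_k},\ket{\eta_{k'}}$, made when the lemma is applied in the proof of Theorem~\ref{thm:optimality}.)

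The probabilistic heart --- the Beta tail, or equivalently the spherical-cap volume --- is routine, so the only real work, and where I expect the main (minor) friction, is the circuit bookkeeping. In particular, if one wants the lemma to cover circuits that also use ancilla wires, one should cap the number of ancillas at $M$; each gate then has $O((n+M)^{2})$ placements, the circuit count becomes $\exp(O(Mn))$, and the threshold degrades to $\varepsilon=O(2^{-n/3}\sqrt{n})$ --- still $\widetilde{O}(2^{-n/3})$, hence immaterial downstream. It is worth flagging that this same computation shows $\varepsilon\sim\sqrt{M\log g/N}$ is essentially forced: for any smaller $\varepsilon$ the single-circuit failure probability $e^{-\varepsilon^{2}N}$ is too large to survive the union over the $g^{\Theta(M)}$ circuits, which is precisely why the error term in Theorem~\ref{thm:optimality} is $\widetilde{O}(2^{-n/3})$ and not smaller.
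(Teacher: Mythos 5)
Your proof takes essentially the same route as the paper's: a union bound over the at most $g^{O(M)}$ circuits (equivalently, reachable states), combined with the spherical-cap tail $\Pr\big[\abs{\braket{\eta_0}{C\eta_1}}\ge\varepsilon\big]=(1-\varepsilon^2)^{N-1}\le e^{-\varepsilon^2(N-1)}$ for a fixed circuit. The only differences are presentational --- you invoke unitary invariance of the Haar measure to reduce to a single Haar vector, whereas the paper conditions on $\ket{\eta_0}$ and enumerates the states reachable from it --- and your added remarks on ancilla bookkeeping and on why the threshold $\varepsilon\sim\sqrt{M\log g/N}$ is forced are correct and consistent with the tightness claimed in Theorem~\ref{thm:optimality}.
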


\begin{proof}
We use a simple counting argument. \ Starting at $\ket{\eta_0}$, a circuit with $M$ gates taken from a universal set of gates of size $g=n^{O(1)}$ reaches at most $O(g^M)$ different states $\prnttt{\ket {\gamma_j}}$. \
The following fact yields the probability of $\abs{\braket{\eta_0}{\gamma_j}}\ge\varepsilon$ for a specific $j$.
\begin{fact} [see Lemma 3.6 in \cite{AK07}]
Let $\ket \psi$ be a Haar-random state of dimension $N$. \ Then for any $\varepsilon >0$,
\begin{equation}
    \Pr \prnt{\abs{\braket{ \psi}{{0\dots 0}}}\ge \varepsilon}=  (1-\varepsilon^2)^{N-1}.
\end{equation}
\end{fact}
By the union bound, the probability that $\ket{\eta_1}$ has an overlap at least $\varepsilon$ with any of the states $\prnttt{\ket {\gamma_j}}$ is at most
\begin{equation} \label{eq:temp}
 {{g^M (1-\varepsilon^2)^{N-1}}} \le  
{{g^M e^{-\varepsilon^2(N-1)}}} \le  
{{2^{ M\log (g)-\log (e)\cdot M\log (g) (1-1/N)}}}\le 2^{M\log(g) \frac{1-\log(e)}{2}},
\end{equation}
where the first inequality comes from $(1+x)\le e^x$. \ The exponent in the last expression is of order $-2^{n/3}\log n$. \ Hence, the probability of any $\ket{\gamma_j}$ to  have at least $\varepsilon$ overlap with $\ket{\eta_1}$ is doubly-exponentially small.
\end{proof}

\end{document}